\definecolor{ForestGreen}{rgb}{0.1333,0.5451,0.1333}
\definecolor{DarkRed}{rgb}{0.65,0,0}
\definecolor{Red}{rgb}{1,0,0}
\newcommand{\tO}{\tilde{O}}
\newcommand{\tOmega}{\tilde{\Omega}}
\declaretheorem{theorem}
\declaretheorem[numberlike=theorem]{lemma}
\crefname{algorithm}{Algorithm}{Algorithms}
\Crefname{algorithm}{Algorithm}{Algorithms}
\theoremstyle{definition}
\declaretheorem[numberlike=theorem]{definition}
\title{Minimum Cuts in Directed Graphs via $\sqrt{n}$ Max-Flows}
\author{ Ruoxu Cen\\Duke University \and Jason Li\\Carnegie Mellon University \and Danupon Nanongkai\\University of Copenhagen \& KTH \and Debmalya Panigrahi\\Duke University \and Thatchaphol Saranurak\\University of Michigan, Ann Arbor}
\begin{document}

\maketitle

\begin{abstract}
We give an algorithm to find a mincut in an $n$-vertex, $m$-edge weighted directed graph using $\tO(\sqrt{n})$ calls to {\em any} maxflow subroutine. Using state of the art maxflow algorithms, this yields a directed mincut algorithm that runs in $\tO(m\sqrt{n} + n^2)$ time. This improves on the 30 year old bound of $\tO(mn)$ obtained by Hao and Orlin for this problem.
\end{abstract}

\pagenumbering{gobble}

\clearpage

\pagenumbering{arabic}

\section{Introduction}
The {\em minimum cut} (or mincut) problem is one of the most widely studied problems in graph algorithms. In directed graphs (or {\em digraphs}), the goal of this problem is to find a minimum weight set of edges whose removal creates multiple strongly connected components in the graph. Equivalently, a minimum cut of a digraph is a bipartition of the vertices into two non-empty sets $(S, V\setminus S)$ such that the weight of edges from $S$ to $V\setminus S$ is minimized. This problem can be solved by using $O(n)$ maxflow calls by finding the minimum among all the $r-v$ mincuts and $v-r$ mincuts for all vertices $v\not= r$. In a beautiful result, Hao and Orlin~\cite{HaoO92} showed that these maxflow calls can be amortized to match the running time of only polylog($n$) calls to the {\em push-relabel maxflow algorithm}~\cite{GoldbergT88}. This leads to an overall running time of $\tO(mn)$ on an $m$-edge, $n$-vertex graph. Since their work, better maxflow algorithms have been designed (e.g., Goldberg and Rao~\cite{GoldbergR98}), but the amortization does not work for these algorithms. Using a different technique of duality between rooted mincuts and arborescences, Gabow~\cite{Gabow1995} obtained a running time of $\tO(m\lambda)$ for this problem, where $\lambda$ is the weight of a mincut (assuming integer weights). This is at least as good as the Hao-Orlin running time for unweighted simple graphs, but can be much worse for weighted graphs. Indeed, the Hao-Orlin bound of $\tO(mn)$ remains the state of the art for the directed mincut problem on arbitrary weighted graphs. 

In this paper, we give an algorithm for the directed mincut problem that has a time complexity of $O(\sqrt{n})$ maxflow calls. Importantly, and unlike the Hao-Orlin algorithm, our algorithm can use {\em any} maxflow algorithm; in fact, it treats the maxflow algorithm as a black box. Using state of the art max flow algorithms that run in $\tO(m+n^{3/2})$ time~\cite{BrandLLSSSW21}, this yields a directed mincut algorithm in $\tO(m\sqrt{n} + n^2)$ time, thereby improving on the Hao-Orlin bound. %
Moreover, if one were to believe the widely held conjecture that maxflow would eventually be solved in $\tO(m)$ time, then the running time of our algorithm would automatically become $\tO(m\sqrt{n})$.

\begin{theorem}
\label{thm:main}
There is a randomized Monte Carlo algorithm that finds a minimum cut whp in $\tO(m\sqrt{n} + n^2)$ time on an $n$-vertex, $m$-edge directed graph.
\end{theorem}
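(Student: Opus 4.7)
The plan is a case analysis on the size $k = |S^*|$ of the smaller side of the unknown mincut $(S^*, \bar S^*)$. By running the algorithm also on the edge-reversed graph, we may assume WLOG that $k \le n/2$ and that the mincut orients $S^* \to \bar S^*$.

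\emph{Balanced regime} ($k \ge \sqrt n$). A uniformly random ordered pair $(s,t)$ of distinct vertices satisfies $s \in S^*, t \in \bar S^*$ with probability $k(n-k)/(n(n-1)) = \Omega(1/\sqrt n)$; for any such pair, the minimum $s$-to-$t$ cut equals the global mincut. Thus, sampling $\tO(\sqrt n)$ random pairs and computing their $s$-to-$t$ maxflows recovers the mincut whp, using $\tO(\sqrt n)$ maxflow calls.

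\emph{Unbalanced regime} ($k < \sqrt n$). The central tool is a \emph{directed isolating cut} primitive: given a terminal set $T \subseteq V$, compute for each $t \in T$ the minimum directed cut $(A_t, \bar A_t)$ with $t \in A_t$ and $T \setminus \{t\} \subseteq \bar A_t$. For each geometric guess $k' \in \{1, 2, 4, \ldots, \sqrt n\}$ of $k$, sample $T$ of size $\Theta(n/k')$: when $k' = \Theta(k)$, with constant probability exactly one $t \in T$ lies in $S^*$, so the corresponding isolating cut coincides with the mincut. A few $O(\log n)$ repetitions per guess amplify success to whp.

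The main obstacle is the directed isolating cut lemma itself. The undirected analogue of Li and Panigrahi achieves $O(\log |T|)$ maxflow calls per invocation via a submodular uncrossing argument that crucially uses the symmetry of undirected cuts. The directed version must separately handle the out-cut and in-cut functions $d^+$ and $d^-$; each is individually submodular, so an analogous recursive halving of $T$ should yield the desired isolating cuts, but the asymmetry appears to cost roughly $\tO(\sqrt{|T|})$ rather than polylog maxflow calls per invocation. Summing $\tO(\sqrt{n/k'})$ across the geometric guesses $k' \in \{1,2,4,\ldots,\sqrt n\}$ gives $\tO(\sqrt n)$ for the unbalanced regime, matching the balanced case. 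Together this yields $\tO(\sqrt n)$ maxflow calls overall; invoking the best-known $\tO(m + n^{3/2})$-time maxflow produces the claimed $\tO(m\sqrt n + n^2)$ running time, with the additive $n^2$ absorbing non-maxflow bookkeeping such as constructing auxiliary graphs and maintaining the pool of candidate cuts.
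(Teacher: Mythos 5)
Your balanced case is essentially the paper's own first case (Lemma~\ref{lem:size-balanced}): sample $\tO(\sqrt n)$ pairs, run maxflow on each, and one pair straddles the mincut whp. That part is fine. The problem is the unbalanced regime, where your entire argument rests on a ``directed isolating cut'' primitive that you never construct. The undirected isolating-cuts lemma uses $O(\log|T|)$ maxflows on the whole graph plus one maxflow per terminal on \emph{vertex-disjoint} pieces, and both steps hinge on the symmetry of the undirected cut function: symmetry plus submodularity (posimodularity) is what forces the minimal isolating sets $A_t$ to be pairwise disjoint, which is what lets the per-terminal maxflows be charged to disjoint subgraphs and keeps the total cost near-linear. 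For a directed cut function this disjointness fails -- two source-minimal sets $A_t$ and $A_{t'}$ can overlap arbitrarily -- so the recursive halving does not localize the computation, and nothing in your writeup (nor, to my knowledge, in the literature this paper builds on) delivers directed isolating cuts in $\tO(\sqrt{|T|})$ maxflow calls. You flag this yourself (``appears to cost roughly\dots''), but an acknowledged missing lemma is still a missing lemma: the key step of the unbalanced case is asserted, not proved, and it is exactly the step where the directed setting breaks the known technique.

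For comparison, the paper's unbalanced case uses entirely different machinery: (i) subsample edges to scale the mincut value down to $\tO(\sqrt n)$, with a union bound showing all \emph{unbalanced} cuts concentrate (\Cref{thm:sparsification}); (ii) overlay a weighted constant-degree expander so that balanced cuts cannot spuriously become the minimum after sampling; (iii) pack $\tO(\sqrt n)$ $s$-arborescences via multiplicative weights, using the \emph{exact} duality between $s$-arborescence packing and $s$-mincut so that a sampled arborescence $1$-respects the mincut whp (\Cref{thm:tree-packing}); and (iv) recover the minimum $1$-respecting cut with $O(\log n)$ maxflows via a centroid decomposition of the arborescence (\Cref{thm:tree-alg}). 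To make your outline work you would need to either supply a genuine proof of the directed isolating-cut bound or replace that primitive with an argument of this kind.
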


\paragraph{Our Techniques.}
At a high level, our paper is inspired by Karger's celebrated near-linear time mincut algorithm in undirected graphs~\cite{Karger00}. Karger's algorithm has three main steps: (a) sparsify the graph by random sampling of edges to reduce the mincut value to $O(\log n)$, (b) use a semi-duality between mincuts and spanning trees to pack $O(\log n)$ edge-disjoint spanning trees in the sparsifier, and (c) find the minimum weight cut among those that have only one or two edges in each such spanning tree using a dynamic program. But, directed graphs are substantially different from undirected graphs. In particular, steps (a) and (c) are not valid in a directed graph. We cannot hope to sparsify a directed graph since many directed graphs do not have sparsifiers even in an existential sense. Moreover, even if a mincut had just a single edge in a spanning tree, Karger's dynamic program to recover this cut cannot be used in a directed graph. 

To overcome these challenges, we adopt several ingredients that we outline below:
\begin{itemize}
    \item We consider two possibilities: either the mincut has $\tOmega(\sqrt{n})$ vertices on the smaller side or fewer (let us call these {\em balanced} and {\em unbalanced} cuts respectively). If the mincut is a balanced cut, we use two random samples of $\tO(\sqrt{n})$ and $\tO(1)$ vertices each, and find $s-t$ mincuts for all pairs of vertices from the two samples. It is easy to see that whp, the two samples would respectively {\em hit} the smaller and larger sides of the mincut, and hence, one of these $s-t$ mincuts will reveal the overall mincut of the graph.
    \item The main task, then, is to find the mincut when it is unbalanced. In this case, we use a sequence of steps. The first step is to use {\em cut sparsification} of the graph by random sampling of edges. This scales down the size of the mincut, but unlike in an undirected graph, all the cuts of a digraph do not necessarily converge to their expected values in the sample. However, crucially, {\em the mincut can be scaled to $\tO(\sqrt{n})$ while ensuring that all the unbalanced cuts converge to their expected values.}
    \item Since only the unbalanced cuts converge to their expected values, it is possible that some balanced cut is the new mincut of the sampled graph, having been scaled down disproportionately by the random sampling. Our next step is to {\em overlay} this sampled graph with an {\em expander} graph, a technique inspired by recent work of Li~\cite{Li21}.  Note that an expander has a larger weight for balanced cuts than for unbalanced cuts. We choose the expansion of the graph carefully so that the balanced cuts get sufficiently large weight of edges that they are no longer candidates for the mincut of the sample, while the unbalanced cuts are only distorted by a small multiplicative factor. 
    \item At this point, we have obtained a graph where the original mincut (which was unbalanced) is a near-mincut of the new graph. Next, we create a (fractional) packing of edge-disjoint arborescences\footnote{An {\em arborescence} is a spanning tree in a directed graph where all the edges are directed away from the root.} in this graph using a multiplicative weights update procedure (e.g.,~\cite{Young1995}). By duality, these arborescenes have the following property: if we sample $O(\log n)$ random arborescences from this packing, then there will be at least one arborescence whp such that the original mincut $1$-respects the arborescence. (A cut $1$-respects an arborescence if the latter contains just one edge from the cut.) 
    \item Thus, our task reduces to the following: given an arborescence, find the minimum weight cut in the original graph among all those that $1$-respect the arborescence. Our final technical contribution is to give an algorithm that solves this problem using $O(\log n)$ maxflow computations. For this purpose, we use a centroid-based recursive decomposition of the arborescence, where in each step, we use a set of maxflow calls that can be amortized on the original graph. The minimum cut returned by all these maxflow calls is eventually returned as the mincut of the graph.
\end{itemize}

We note that unlike both the Hao-Orlin algorithm and Gabow's algorithm that are both deterministic algorithms, our algorithm is randomized (Monte Carlo) and might yield the wrong answer with a small (inverse polynomial) probability. Derandomizing our algorithm, or matching our running time bound using a different deterministic algorithm, remains an interesting open problem.

\paragraph{Previous Work.}
The mincut problem has been studied in directed graphs over several decades. For unweighted graphs, Even and Tarjan~\cite{EvenT75} gave an algorithm for this problem that runs in $O(mn \cdot\min(\sqrt{m}, n^{2/3}))$ time. This was improved by Schnorr~\cite{Schnorr79} improved by this bound for certain graphs to $O(mn\lambda)$, where $\lambda$ is the value of the directed mincut. This was further improved by Mansour and Schieber~\cite{MansourS89} to $O(n\cdot \min(m, \lambda^2 n))$ after almost a decade of Schnorr's work. Mansour and Schieber's bound of $O(mn)$ was matched up to logarithmic factors for the more general case of weighted digraphs by Hao and Orlin~\cite{HaoO92}. Finally, Gabow~\cite{Gabow1995} gave an algorithm that runs in $\tO(m\lambda)$ which further refines this bound for graphs with small $\lambda$. These remained the fastest directed mincut algorithms for almost 30 years before our work.

\paragraph{Concurrent Work.}
Two recent results on algorithms for finding mincuts in directed graphs were obtained concurrently and independently of our work. First, Chekuri and Quanrud \cite{ChekuriQ21}  showed an exact algorithm with running time $\tO(n^2 U)$ if edge weights are integers between $1$ and $U$.\footnote{We note that we can use the degree reduction technique in \cite{ChekuriQ21} to speed up our algorithm from time $\tO(m\sqrt{n}+n^2)$ to $\tO(n^2)$, but we omit details of this improvement in this manuscript to avoid interdependence on concurrent, unpublished research.} 
Second, Quanrud \cite{Quanrud21} has obtained an $(1+\epsilon)$-approximate algorithm that runs in $\tilde{O}(n^2/\epsilon^2)$ time.\footnote{Quanrud can also obtain $o(mn)$-time algorithms using the currently fastest maxflow algorithm on sparse graphs by Gao, Liu, and Peng \cite{GaoLP21}. We can also obtain the same time but for exact mincuts.} 
Both papers also extend their ideas to obtain approximation results for other problems as well, such as the vertex mincut problem.

\section{The Directed Min-Cut Algorithm}

Given a directed graph $G = (V, E)$ with non-negative edge weights $w$, we consider the problem of finding a (global) minimum directed cut in this graph.
For simplicity, we assume that all edge weights $w$ are integers and are polynomially bounded.
We denote $\overline{U}=V\setminus U$.
Let $\partial U$ be the set of edges in the cut $(U, \overline{U})$, and
let $\delta(U)$ be the weight of the cut, i.e.,  $\delta(U)=\sum_{e\in \partial U}w(e)$.
Our goal is to find $\arg\min_{\emptyset\subset U \subset V} \delta(U)$.
Let $MF(m,n)$ denote the time complexity of $s$-$t$ maximum flow on a digraph with $n$ vertices and $m$ edges. The current record for this bound is $MF(m, n) = \tO(m + n^{3/2})$~\cite{BrandLLSSSW21}. We emphasize that our directed mincut algorithm uses maxflow subroutines in a black box manner and therefore, any maxflow algorithm suffices. Correspondingly, we express our running times in terms of $MF(m, n)$.

Next we describe the algorithm.
Let $S^*$ be the source side of a minimum cut. The algorithm considers the following two cases, computes a cut for each case and takes the smaller of the two cuts as its final output.
 \begin{enumerate}
     \item The first case aims to compute the correct mincut in the event that $\min\{|S^*|,|\overline{S^*}|\} > \theta\cdot \sqrt{n}/\log n$. In this case, we randomly sample two vertices $s,t\in V$, then with reasonable probability, they will lie on opposite sides of the mincut. In that case, we can simply compute the maxflow from $s$ to $t$. Repeating the sampling $O(\sqrt{n}\log^2 n)$ times, we obtain the mincut whp. The total running time for this case is $O(MF(m,n)\sqrt{n}\log^2 n)$ and is formalized in Lemma \ref{lem:size-balanced} below:
     \begin{lemma}
\label{lem:size-balanced}
If $\min\{|S^*|,|\overline{S^*}|\} > r$, then whp a mincut can be calculated in time $O(MF(m,n)\cdot (n/r) \cdot \log n)$. 
\end{lemma}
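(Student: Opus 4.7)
\textbf{Proof proposal for Lemma \ref{lem:size-balanced}.}

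The plan is to repeatedly sample a random pair $(s,t) \in V \times V$ and, for each sample, compute both the $s$-to-$t$ and $t$-to-$s$ minimum cuts with two maxflow calls, finally returning the smallest cut encountered. The key observation is that whenever $s$ and $t$ end up on opposite sides of the global mincut $(S^*, \overline{S^*})$, the directed $s$-$t$ mincut value is at most $\delta(S^*) = \lambda$, and since $\lambda$ is the global minimum, the maxflow call must in fact return a cut of value exactly $\lambda$, i.e., a valid global mincut.

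For the probability analysis, I would argue as follows. In a single trial, the probability that $s$ and $t$ land on opposite sides of $(S^*, \overline{S^*})$ is
\[
2 \cdot \frac{|S^*|}{n} \cdot \frac{|\overline{S^*}|}{n}.
\]
Without loss of generality take $|S^*| \le |\overline{S^*}|$, so $|\overline{S^*}| \ge n/2$, and by hypothesis $|S^*| > r$. Hence each trial succeeds with probability at least $r/n$. Running $N = \Theta((n/r) \log n)$ independent trials makes the failure probability at most $(1 - r/n)^N \le e^{-Nr/n} = n^{-\Omega(1)}$, giving success whp. Importantly, if one of the two orientations ($s$-to-$t$ or $t$-to-$s$) happens to separate $S^*$ from $\overline{S^*}$ correctly, the corresponding maxflow call recovers the mincut; running both orientations in each trial spares us from having to guess which side of $S^*$ contains $s$.

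The running time bound is immediate: we perform $O((n/r) \log n)$ trials, each invoking $O(1)$ maxflow computations, for a total of $O(MF(m,n) \cdot (n/r) \cdot \log n)$ time.

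I do not expect a real obstacle here: the argument is essentially a two-line Chernoff-style sampling bound combined with the trivial fact that an $s$-$t$ maxflow recovers the global mincut whenever $s$ and $t$ are separated by it. The only mildly delicate point is handling the asymmetry between $|S^*|$ and $|\overline{S^*}|$ without assuming which side is smaller, which is why I sample $s$ and $t$ symmetrically from $V$ and compute mincuts in both directions rather than trying to place $s$ or $t$ on a designated side.
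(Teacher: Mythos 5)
Your proposal is correct and follows essentially the same approach as the paper: random vertex sampling to hit both sides of the mincut, maxflow calls in both directions to handle the asymmetry, and a $(1-r/n)^N$ failure bound. The only cosmetic difference is that the paper samples one list of $O((n/r)\log n)$ vertices and runs maxflows between consecutive pairs, whereas you sample independent pairs; both yield the same number of maxflow calls and the same guarantee.
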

\begin{proof}
Uniformly sample a list of $k = d \cdot (n/r) \cdot \lg n$ vertices $u_1,\ldots,u_k$, where $d$ is a large constant. Wlog, assume $|S^*| \le |\overline{S^*}|$, and let $\eta=\frac{|S^*|}{n}>\frac{r}{n}$.
With probability at least $1-2(1-\eta)^k\ge 1-2e^{-k\eta}\ge 1-2n^{-d}$, the list $u_1,\ldots,u_k$ contains at least one vertex from each of $S^*$ and $\overline{S^*}$. Hence, there exists $i$ such that $u_i$ and $u_{i+1}$ are on different sides of the $(S^*, \overline{S^*})$ cut.
By calculating maxflows for all ($u_i$, $u_{i+1}$) and ($u_{i+1}$, $u_i$) pairs, and reporting the smallest $s-t$ mincut in these calls, we return a global mincut whp.
\end{proof}
     
     \item The second case takes care of the event that $\min\{|S^*|,|\overline{S^*}|\} \le \theta\cdot \sqrt{n}/\log n$. In this case, we select an arbitrary vertex $s$, and give an algorithm for finding an $s$-mincut defined as:
    \begin{definition}
        An \emph{$s$-mincut} is a minimum weight cut among all those that have $s$ on the source side of the cut, i.e., $\arg\min_{\{s\}\subseteq S \subset V} \delta(S)$.
    \end{definition}
    Repeating this process with all edge directions reversed, and returning the smaller of the $s$-mincuts in the original and the reversed graphs, yields the overall mincut. 
     
    We now describe the $s$-mincut algorithm, where we overload notation to denote the value of the $s$-mincut by $\lambda$. Here, we first guess $O(\log n)$ potential values of $\tilde\lambda$, which is our estimate of $\lambda$, as the powers of $2$, one of which lies in the range $[\lambda,2\lambda]$, and then for each $\tilde\lambda$, sparsifies the graph using \Cref{thm:sparsification} from \Cref{sec:sparsification}. For each such sparsifier $H$, the algorithm then applies \Cref{thm:tree-packing} from \Cref{sec:arborescence} to pack $O(\log n)$ $s$-arborescences in $H$ in $O(m\sqrt n\log n)$ time, one of which will $1$-respect the $s$-mincut in $G$ (for the correct value of $\tilde\lambda$):
    \begin{definition}
        An \emph{$s$-arborescence} is a directed spanning tree rooted at $s$ such that all edges are directed away from $s$.
        A directed $s$-cut \emph{$k$-respects} an $s$-arborescence if there are at most $k$ cut edges in the arborescence.
    \end{definition}
    Finally, for each of the $O(\log n)$ $s$-arborescences, the algorithm computes the minimum $s$-cut that $1$-respects each arborescence; this algorithm is described in Algorithm~\ref{alg:tree} and proved in \Cref{thm:tree-alg} from \Cref{sec:maxflows}. It runs in $O((MF(m,n) + m)\cdot \log n)$ time for each of the $O(\log n)$ arborescences.
 \end{enumerate}
Combining both cases, the total running time becomes $\tilde O(m\sqrt n + MF(m,n)\sqrt n)$, which establishes \Cref{thm:main}.

\section{Sparsification}
\label{sec:sparsification}

This section aims to reduce mincut value to $\tO(\sqrt{n})$ while keeping $S^*$ a $(1+\epsilon)$-approximate mincut for a constant $\epsilon > 0$ that we will fix later.
Our algorithm in this stage has two steps. First, we use random sampling to scale down the expected value of all cuts such that the expected value of the mincut $\delta(S^*)$ becomes $\tO(\sqrt{n})$. We also claim that $\partial S^*$ remains an approximate mincut {\em among all unbalanced cuts} by using standard concentration inequalities. However, since the number of balanced cuts far exceeds that of unbalanced cuts, it might be the case that some balanced cut has now become much smaller in weight than all the unbalanced cuts. This would violate the requirement that $\partial S^*$ should be an approximate mincut in this new graph. This is where we need our second step, where we overlay an expander on the sampled graph to raise the values of all balanced cuts above the expected value of $\partial S^*$ while only increasing the value of $\partial S^*$ by a small factor. This last technique is inspired by recent work of Li~\cite{Li21} for a deterministic mincut algorithm in undirected graphs.

We start with the specific expansion properties that we need, and a pointer to an existing construction of such an expander.
\begin{definition}[Expander]
A $\psi$-expander is an undirected graph $G(V,E)$ such that for any $S\subset V$, $\delta_G(S) \ge \psi\cdot \min\{|S|,|\bar{S}|\}$.
\end{definition}
\begin{lemma}[Theorem 2.4 of \cite{Chuzhoy2019}]
\label{lem:expander}
Given integer $n$, we can construct in linear time an $\alpha_0$-expander $X$ for some constant $\alpha_0>0$, such that every vertex in $X$ has degree at most 9.
\end{lemma}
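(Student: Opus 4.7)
The plan is to construct $X$ by starting from an off-the-shelf explicit expander family, then reducing the maximum degree to the stated bound of $9$, and finally verifying linear time. First I would invoke a known explicit construction of a constant-degree expander on $\Theta(n)$ vertices—for concreteness an LPS Ramanujan graph or the zig-zag product of Reingold, Vadhan, and Wigderson—which yields a $d$-regular graph (for some absolute constant $d$) whose second eigenvalue is bounded away from $d$ by an absolute constant. Cheeger's inequality then converts this spectral gap into edge expansion of the form $\delta(S) \ge \alpha \cdot \min\{|S|,|\bar S|\}$ for some $\alpha > 0$, which is exactly the notion of $\alpha$-expander used in this paper.

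Second, to cap the maximum degree at $9$, I would apply a standard local replacement: substitute each vertex $v$ of degree $d$ by a short cycle on $d$ vertices and reroute the original edges incident to $v$ onto distinct cycle vertices, so that every new vertex has degree at most $3$. A direct combinatorial argument—identifying any cut in the replacement graph with a corresponding cut in the original and tracking the worst-case loss across a single gadget—shows that edge expansion is degraded by only a constant factor, so the resulting graph is an $\alpha_0$-expander for some constant $\alpha_0 > 0$ with maximum degree $3 \le 9$. Finally I would adjust the vertex count to exactly $n$ by light padding (attaching short paths or duplicating gadgets) while preserving both the degree cap and constant expansion, and observe that every stage touches $O(1)$ edges per vertex, so the overall construction runs in $O(n)$ time.

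The main obstacle I expect is showing that \emph{edge} expansion, rather than just spectral expansion, survives the degree-reduction step with a universal constant, since Cheeger's inequality only gives a one-sided conversion and the gadget substitution can in principle create small cuts internal to a single cloud. Ruling out such cuts requires a careful combinatorial analysis of cuts in the replacement graph, but this is routine once each gadget is chosen to be itself expanding (a cycle suffices for constant degree). The paper bypasses all of these details by citing Theorem~2.4 of \cite{Chuzhoy2019}, which packages exactly such a construction together with the stated degree-$9$ bound.
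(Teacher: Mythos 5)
The paper does not prove this lemma at all---it is imported wholesale as Theorem~2.4 of \cite{Chuzhoy2019}, and all that matters downstream is the interface: a degree-$\le 9$ graph on exactly $n$ vertices with $\delta_X(S)\ge\alpha_0\min\{|S|,|\bar S|\}$, constructible in $O(n)$ time. Your from-scratch sketch (explicit spectral expander, easy direction of Cheeger to get edge expansion normalized by $\min\{|S|,|\bar S|\}$, cycle-replacement to cap the degree at $3\le 9$, padding to hit $n$) is the standard route and is essentially what underlies the cited theorem, so as a reconstruction it is sound in outline. The one step you should not wave off is the padding: explicit families such as LPS or Margulis graphs exist only for restricted vertex counts, and ``attaching short paths'' preserves constant expansion only if each original vertex absorbs $O(1)$ padded vertices (a pendant path of superconstant length $\ell$ creates a cut with $|S|=\ell$ but $\delta(S)=1$, killing any fixed $\alpha_0$). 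So you need either a family that is sufficiently dense in $n$ that the deficit can be spread one-per-vertex, or a merge/split argument (build on $n'\in[n,2n]$ vertices and contract pairs, which only helps expansion up to a constant). Your degree-reduction analysis is fine as stated---in the cycle-replacement graph every vertex carries exactly one inter-cloud edge, so the usual majority-rounding argument loses only a constant factor---but note it yields a graph on $dn$ vertices, so the renormalization of the expansion constant and the subsequent resizing to exactly $n$ have to be done together. None of this affects the paper, which treats the lemma as a black box.
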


Now, we prove the main property of this section:

\begin{lemma}
\label{thm:sparsification}
Given a digraph $G$, a parameter $\tilde\lambda\in[\lambda,2\lambda]$, and a constant $\epsilon\in(0,1)$,
we can construct in $O(m\log n)$ time a value $p\in(0,1]$ and a digraph $H$ with $O(m)$ edges such that the following holds whp for the value $p=\min\{\frac{\sqrt n}\lambda,1\}$.

\begin{enumerate}
    \item There is a constant $\theta>0$ (depending on $\epsilon$) such that for any set $\emptyset\ne S\subsetneq V$ with $\min\{|S|,|\bar{S}|\} \le \theta\cdot \sqrt{n}/\log n$, we have $$(1-\epsilon)\cdot p\cdot \delta_G(S)\le \delta_H(S)\le(1+\epsilon)\cdot p\cdot \delta_G(S);$$
    \item For any set $\emptyset\ne S\subsetneq V$, $\delta_H(S)\ge (1-\epsilon)p\lambda$.
\end{enumerate}
\end{lemma}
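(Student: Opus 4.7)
The plan is to build $H$ in two layers. First, construct $H_0$ by independently subsampling each edge of $G$ with probability $p=\min\{\sqrt{n}/\tilde\lambda,\,1\}$, keeping its original weight. (Treat a weight-$w$ edge as $w$ parallel unit-weight copies before sampling; this is affordable since weights are polynomially bounded.) Second, invoke \Cref{lem:expander} to build an $\alpha_0$-expander $X$ on $V$, replace each undirected edge by two anti-parallel directed edges, and scale all expander edge weights by $\beta=\Theta(\log n/\theta)$, where the hidden constant is chosen in terms of $\epsilon$ and $\alpha_0$. Set $H:=H_0+\beta X$. The trivial case $p=1$ (i.e.\ $\tilde\lambda\le\sqrt{n}$) is handled by $H=G$.

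The core analytic step is proving concentration of the sampled cut weight for every unbalanced cut. For any $S$ with $s:=\min\{|S|,|\bar S|\}\le\theta\sqrt{n}/\log n$, we have $\mathbb{E}[\delta_{H_0}(S)]=p\,\delta_G(S)\ge p\lambda\ge\sqrt{n}/2$ (using $\tilde\lambda\le 2\lambda$), so a multiplicative Chernoff bound yields
\[
\Pr\!\left[\,\bigl|\delta_{H_0}(S)-p\,\delta_G(S)\bigr|>\epsilon'\,p\,\delta_G(S)\right]\;\le\;2\exp\!\bigl(-\Omega(\epsilon'^{2}\sqrt{n})\bigr),
\]
for a slightly smaller $\epsilon'<\epsilon$ that will be combined with the expander-induced error. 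The number of unbalanced cuts is at most $\sum_{k\le\theta\sqrt{n}/\log n}\binom{n}{k}\le e^{O(\theta\sqrt{n})}$, so picking $\theta$ small enough relative to $\epsilon'^{2}$ pushes the union-bounded failure probability down to $\exp(-\Omega(\sqrt{n}))$, well below any inverse polynomial, giving whp.

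The expander layer simultaneously lifts balanced cuts past the bound in item 2 and caps the expander's contribution to unbalanced cuts in item 1. For balanced $S$, expansion gives $\delta_{\beta X}(S)\ge\alpha_0\beta s\ge\alpha_0\beta\theta\sqrt{n}/\log n\ge(1-\epsilon)p\lambda$ by our choice of $\beta$, establishing item 2 on those $S$. For unbalanced $S$, the degree-$9$ bound yields $\delta_{\beta X}(S)\le 9\beta s\le 9\beta\theta\sqrt{n}/\log n$, which by the same $\beta$ is $O(\epsilon)\cdot p\lambda\le O(\epsilon)\cdot p\,\delta_G(S)$. Combined with the Chernoff bound on $\delta_{H_0}(S)$, this gives item 1 after absorbing constants into $\epsilon$. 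Item 2 on unbalanced $S$ is then immediate from the lower bound in item 1 together with $\delta_G(S)\ge\lambda$. The running time is $O(m\log n)$, dominated by sampling, and the edge count is $O(m)+O(n)=O(m)$.

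The main obstacle is the simultaneous constraint on $\beta$: it must be large enough to lift the smallest balanced cut above $(1-\epsilon)p\lambda$, yet small enough that at an unbalanced cut with $\delta_G(S)$ close to $\lambda$, the additive expander contribution $\Theta(\beta\theta\sqrt{n}/\log n)=\Theta(\sqrt{n})$ stays inside the $\epsilon p\,\delta_G(S)\ge\epsilon\sqrt{n}$ error budget. Because the ratio between the $\alpha_0 s$ lower bound and the $9s$ upper bound on expander cuts is only $\alpha_0/9$, the feasible range $[(1-\epsilon)\log n/(\alpha_0\theta),\,\epsilon\log n/(9\theta)]$ for $\beta$ is nonempty only when $\epsilon$ is a specific constant depending on $\alpha_0$; this is acceptable since the paper treats $\epsilon$ as a fixed constant throughout.
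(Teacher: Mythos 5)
Your construction is the same as the paper's (binomial subsampling at rate $p=\min\{\sqrt n/\tilde\lambda,1\}$ plus a scaled constant-degree expander overlay), and the concentration and counting steps are fine. But there is a genuine gap in how you set the expander scale, and you have in fact put your finger on it yourself in your last paragraph without resolving it. Because you use a \emph{single} size threshold $\theta\sqrt n/\log n$ both to delimit the cuts that must satisfy item 1 and to decide which cuts the expander must lift for item 2, you are forced into the two incompatible constraints $\alpha_0\beta\theta\sqrt n/\log n\ge(1-\epsilon)p\lambda$ (with $p\lambda$ as large as $\sqrt n$) and $9\beta\theta\sqrt n/\log n\le\epsilon'\,p\lambda$ (with $p\lambda$ as small as $\sqrt n/2$). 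These are simultaneously satisfiable only when $\epsilon\ge 36/(36+\alpha_0)$, i.e.\ $\epsilon$ extremely close to $1$ since $\alpha_0$ is a small absolute constant. That is not "acceptable": the lemma is stated for an arbitrary constant $\epsilon\in(0,1)$, and the downstream use in \Cref{thm:tree-packing} explicitly needs $\epsilon$ small enough that $\frac{(1+\epsilon)^2}{1-\epsilon}\le 1+4\epsilon\le 3/2$, so a version of the lemma valid only for $\epsilon$ near $1$ is useless there. Moreover with $\epsilon\approx 1$ the guarantee $(1-\epsilon)p\,\delta_G(S)\le\delta_H(S)$ is nearly vacuous.

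The missing idea is to decouple the two roles of the threshold, as the paper does. Prove concentration for \emph{all} cuts with $\min\{|S|,|\bar S|\}\le r$ where $r=\Theta(\epsilon^2\sqrt n/\log n)$, but only require item 1 for the smaller class $\min\{|S|,|\bar S|\}\le\theta\sqrt n/\log n$ with $\theta\sqrt n/\log n=\frac{\epsilon'\alpha_0}{9}r\ll r$. Then for item 2 the "middle" cuts with $\theta\sqrt n/\log n<\min\{|S|,|\bar S|\}\le r$ are already handled by concentration alone ($\delta_H(S)\ge\delta_{\hat G}(S)\ge(1-\epsilon')p\,\delta_G(S)\ge(1-\epsilon)p\lambda$), so the expander only needs to lift cuts with $\min\{|S|,|\bar S|\}>r$; this requires $\alpha\alpha_0 r\ge\sqrt n$, while the damage to item-1 cuts is $9\alpha\cdot\theta\sqrt n/\log n=\epsilon'\sqrt n$, and both hold with $\alpha=\sqrt n/(\alpha_0 r)$. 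The gap between $r$ and $\theta\sqrt n/\log n$ absorbs the $9/\alpha_0$ ratio between the expander's degree bound and its expansion, which is exactly the tension your single-threshold argument cannot escape. With that change (and running the union bound over all cuts of size $\le r$ rather than $\le\theta\sqrt n/\log n$), your proof goes through for every constant $\epsilon\in(0,1)$.
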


\begin{proof}
If $\tilde\lambda\le 2\sqrt{n}$, then $\lambda\le\tilde\lambda\le2\sqrt n$ as well, so we set $H$ to be $G$ itself, which satisfies all the properties for $p=1$.
For the rest of the proof, we assume that $\tilde\lambda>2\sqrt{n}$, so that $\lambda\ge\sqrt n$, and we set $p=\frac{\sqrt{n}}{\tilde\lambda}\le1$. Throughout the proof, define $\epsilon'=\epsilon/2$,  $r=\frac{\epsilon'^2}{6}\sqrt{n}/\log n$,  $\alpha=\frac{\sqrt{n}}{\alpha_0r}$, and $\theta=\frac{\epsilon'^3\alpha_0}{54}$, where $\alpha_0$ is the constant from Lemma \ref{lem:expander}.

We first construct digraph $\hat{G}$ by reweighting the edges of $G$ as follows. For each edge $e$ in $G$, assign it a random new weight $w_{\hat G}(e)$ chosen according to binomial distribution $B(w(e), p)$. (If $w_{\hat G}(e)=0$, then remove $e$ from $\hat G$.) 
For each set $\emptyset\ne S\subsetneq V$ with $\min\{|S|, |\bar{S}|\} \le r$, we have $\mathds{E}\delta_{\hat{G}}(S)=p\delta_G(S)$, and by Chernoff bound, the probability that $\delta_{\hat{G}}(S)$ falls outside $[(1-\epsilon')p\delta_G(S), (1+\epsilon')p\delta_G(S)]$ is upper-bounded by $2e^{-\lambda\epsilon'^2 /3} \le 2n^{-2r}$.
There are $O(n^r)$ sets $S$ with $\min\{|S|,|\bar{S}|\}\le r$, so by a union bound, whp all such sets satisfy $(1-\epsilon')p\delta_G(S) \le \delta_{\hat{G}}(S) \le (1+\epsilon')p\delta_G(S)$.

Construct graph $X$ according to Lemma \ref{lem:expander} and split each undirected edge into two directed edges.
Let $H$ be the ``union" of $\hat{G}$ and $\alpha X$, so that each edge $e$ in $H$ has weight $w_H(e)=w_{\hat G}(e)+\alpha w_X(e)$, where we say $w(e)=0$ if $e$ does not exist in the corresponding graph.

We now show that $H$ satisfies the two desired properties.
 \begin{enumerate}
 \item For any set $\emptyset\ne S\subsetneq V$ with $\min\{|S|,|\bar{S}|\} \le \theta\cdot \sqrt{n}/\log n= \frac{\epsilon'\alpha_0}{9}r\le r$, we have $\delta_H(S)\ge\delta_{\hat{G}}(S)\ge (1-\epsilon')p\delta_G(S)$ from before, so $\delta_H(S)\ge(1-\epsilon)p\delta_G(S)$ as well. For the upper bound, we have
\[ \delta_H(S)=\delta_{\hat{G}}(S)+\alpha\delta_X(S)\le (1+\epsilon')p\delta_G(S)+9\alpha|S| \le (1+\epsilon')p\delta_G(S)+\epsilon'\sqrt{n} \le (1+\epsilon)p\delta_G(S) .\]
 \item For any set $\emptyset\ne S\subsetneq V$ with $\min\{|S|,|\bar{S}|\} \le \theta\cdot \sqrt{n}/\log n= \frac{\epsilon'\alpha_0}{9}r\le r$, we have  $\delta_H(S)\ge \delta_{\hat{G}}(S)\ge (1-\epsilon) p \delta_G(S) \ge (1-\epsilon)p\lambda$ as required by  property~(2).
When $\min\{|S|,|\bar{S}|\}>r$, we have $\delta_H(S)\ge\alpha\delta_X(S)\ge\alpha\alpha_0 r\ge\sqrt{n}$ for all $\emptyset\ne S\subsetneq V$.

 \end{enumerate}

Finally, $H$ has $O(m)$ edges because $E(\hat{G})$ is a subset of $E(G)$ and $E(X)=O(n)$.
\end{proof}

\section{Finding a 1-respecting Arborescence}
\label{sec:arborescence}

In this section, we assume that there is an unbalanced mincut and show how to obtain an $s$-arborescence that 1-respects the mincut. 
More formally, we prove the following: 

\begin{lemma}
\label{thm:tree-packing}
Given weighted digraph $G$ and a fixed vertex $s$ such that $s$ is in the source side of a minimum cut $S^*$ and $\min\{|S^*|,|\overline{S^*}|\}\le \theta\cdot \sqrt{n}/\log n$ where $\theta$ is defined in \Cref{thm:sparsification}, in $O(m\sqrt{n}\log n)$ time we can find $O(\log n)$ $s$-arborescences, such that whp a minimum cut 1-respects one of them.
\end{lemma}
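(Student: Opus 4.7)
The plan has three stages: sparsify, pack, and sample.

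First, I would apply \Cref{thm:sparsification} to reduce the $s$-mincut value to $\tO(\sqrt n)$ while preserving $S^*$ as an approximate $s$-mincut. Because $\lambda$ is unknown a priori, I will try all $O(\log n)$ powers of two as guesses $\tilde\lambda$; one of them lies in $[\lambda, 2\lambda]$. For that correct guess, Properties~1 and~2 of \Cref{thm:sparsification} jointly imply that the resulting sparsifier $H$ has $s$-mincut value $\lambda_H = \tO(\sqrt n)$ and that $\delta_H(S^*) \le (1+O(\epsilon))\lambda_H$; invoking Property~1 on $S^*$ is legal precisely because $S^*$ is unbalanced, i.e.\ $\min\{|S^*|,|\overline{S^*}|\}\le \theta\sqrt n/\log n$.

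Second, for each sparsifier $H$ I would compute a near-optimal fractional packing $\{(T_i,\mu_i)\}$ of $s$-arborescences. By Edmonds' branching theorem the LP optimum equals $\lambda_H$, and a standard multiplicative-weights packing algorithm (e.g., in the style of Young~\cite{Young1995}) can produce a $(1-\epsilon)$-approximate packing with total weight $\mu := \sum_i \mu_i \ge (1-\epsilon)\lambda_H$ in $\tO(m\lambda_H/\epsilon^2)$ time using a minimum-weight $s$-arborescence oracle in each iteration. Since $\lambda_H = \tO(\sqrt n)$ and $\epsilon$ is a constant, this is $\tO(m\sqrt n)$ per guess; summed over the $O(\log n)$ guesses, the total is $\tO(m\sqrt n\,\log n)$, matching the lemma's bound.

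Third, I would sample $O(\log n)$ arborescences independently from the packing, drawing $T_i$ with probability $\mu_i/\mu$. Every $s$-arborescence contains at least one edge crossing $\partial S^*$ (it must reach $\overline{S^*}$ from $s\in S^*$ along directed edges), so the capacity constraint on $\partial S^*$ gives $\sum_i \mu_i \,|T_i\cap\partial S^*| \le \delta_H(S^*) \le (1+O(\epsilon))\lambda_H$. Combined with $\mu \ge (1-\epsilon)\lambda_H$ this yields $\sum_i \mu_i(|T_i\cap\partial S^*|-1) = O(\epsilon)\lambda_H$. Picking $\epsilon$ to be a small enough constant in \Cref{thm:sparsification}, Markov's inequality forces a constant fraction of the $\mu_i/\mu$-probability mass onto arborescences that $1$-respect $S^*$, so $O(\log n)$ samples suffice for a whp guarantee.

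The hard part will be stage two: implementing the arborescence packing in $\tO(m\sqrt n)$ time. This relies crucially on stage one having driven $\lambda_H$ down to $\tO(\sqrt n)$, which controls both the number of MW iterations and the width of the packing LP; the min-weight $s$-arborescence subroutine itself is classical ($O(m\log n)$ by Chu--Liu--Edmonds--Tarjan). The sampling argument in stage three is routine once the sparsification has guaranteed the $(1+O(\epsilon))$ near-optimality of $S^*$ in $H$; the main subtlety there is only that every $s$-arborescence is forced to contribute at least one unit of crossing to every $s$-cut, which is what converts an approximate LP optimum into a constant lower bound on the $1$-respecting probability.
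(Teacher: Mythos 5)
Your proposal is correct and follows essentially the same route as the paper: sparsify via \Cref{thm:sparsification} to drive the $s$-mincut of $H$ down to $O(\sqrt n)$, run a Young-style multiplicative-weights packing with a min-weight $s$-arborescence oracle for $\tilde O(\lambda_H)$ iterations, and use Edmonds/Gabow duality plus the fact that every $s$-arborescence crosses $\partial S^*$ at least once to turn the $(1+O(\epsilon))$-near-optimality of $S^*$ in $H$ into a constant probability of sampling a $1$-respecting arborescence via Markov. The only cosmetic difference is that you fold the $O(\log n)$ guesses of $\tilde\lambda$ into this lemma, whereas the paper handles the guessing in the outer algorithm.
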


The idea of this lemma is as follows. First, we apply \Cref{thm:sparsification} to our graph
$G$ and obtain graph $H$. Whp, a mincut $S^{*}$ in $G$ corresponds
to a cut in $H$ of size $(1\pm\epsilon)p\lambda$ and no cut in $H$
has size less than $(1-\epsilon)p\lambda$. That is, $S^{*}$ is a
$(1+O(\epsilon))$-approximate mincut in $H$. It remains to find
an arborescence in $H$ that 1-respects $S^{*}$. To do this, we employ
a multiplicative weight update (MWU) framework. The algorithm begins by setting all
edge weights to be uniform (say, weight $1$). Then, we repeat for $O(\sqrt{n}\log(n)/\epsilon^{2})$
rounds. For each round, we find in near-linear time a minimum weight arborescence and
multiplicatively increase the weight of every edge in the arborescence. 

Using the fact that there is no duality gap between arborescence packing
and mincut \cite{Edmonds73,Gabow1995}, a standard MWU analysis implies that these arborescences
that we found, after some scaling, form a $(1+\epsilon)$-approximately
optimal fractional arborescence packing. So our arborescence crosses
$S^{*}$ at most $(1+O(\epsilon))<2$ times on average. Thus, if we
sample $O(\log n)$ arborescences from our collections, whp, one of
them will 1-respect $S^{*}$.\footnote{This should be compared with Karger's mincut algorithm in the undirected case, where there is a factor $2$ gap, and hence Karger can only guarantee a $2$-respecting tree in the undirected case.} Below, we formalize this high level
description.

\begin{definition}[Packing problem \cite{Young1995}]
For convex set $P\subseteq \mathds{R}^n$ and nonnegative linear function $f:P\to \mathds{R}^m$, let $\gamma^*=\min_{x\in P}\max_{j\in[m]} f_j(x)$ be the solution in $P$ that minimizes the maximum value of $f_j(x)$ over all $j$, and define the {\em width} of the packing problem as $\omega=\max_{j\in[m],x\in P}f_j(x)-\min_{j\in[m],x\in P}f_j(x)$.
\end{definition}

The fractional arborescence packing problem conforms to this definition.
Enumerate all the $s$-arborescences as $A_1,A_2,\ldots,A_N$. We represent a fractional packing of arborescences as a vector in $\mathds{R}^N$, where coordinate $i$ represents the fractional contribution of $A_i$ in the packing. Let$P=\{x\in \mathds{R}^N : x^T1=1,x\ge0\}$ be the convex hull of all single arborescences.
For each edge $j$ with capacity $w(j)$, $f_j(x)=\sum_{i\in[N]} x_i 1[j\in T_i]/w(j)$ is the relative load of arborescence packing $x$ on edge $j$.
It is easy to see that $\omega\le1/w_{\min}$ for tree packing.
The objective function is to minimize the maximum load: $\gamma^*=\min_{x\in P}\max_{j\in[m]}f_j(x)$. 

For any fractional arborescence packing $x\in\mathds{R}^N$ with value $x^T1=v$ where $f_j(x)\le1$ for all edges $j$, we have $\frac1vx\in P$. In particular, the maximum arborescence packing, once scaled down by its value, is exactly the vector in $P$ that minimizes the maximum load. Therefore, it suffices to look for the vector $x\in P$ achieving the optimal value $\gamma^*$, and then scale the vector up by $1/\gamma^*$ to obtain the maximum arborescence packing.

Next we describe the packing algorithm (Figure 2 of \cite{Young1995}).
Maintain a vector $y\in \mathds{R}^m$, initially set to $y=1$.
In each iteration, find $x=\arg\min_{x\in P}\sum_j y_j f_j(x)$, and then add $x$ to set $S$ and replace $y$ by the vector $y'$ defined by $y'_j=y_j(1+\epsilon f_j(x)/\omega)$.
After a number of iterations, return $\bar{x}\in P$, the average of all the vectors $x$ over the course of the algorithm. The lemma below upper bounds the number of iterations that suffice:

\begin{lemma}[Corollary 6.3 of \cite{Young1995}]
\label{lem:packing}
After $\lceil\frac{(1+\epsilon)\omega\ln m}{\gamma^*((1+\epsilon)\ln(1+\epsilon)-\epsilon)}\rceil$ iterations of the packing algorithm, $\bar{\gamma}=\max_j f_j(\bar{x})\le (1+\epsilon)\gamma^*$.
\end{lemma}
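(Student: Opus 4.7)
The plan is to carry out the standard multiplicative-weights potential argument. Define the potential $\Phi^{(t)}=\sum_{j=1}^m y_j^{(t)}$, where $y^{(t)}$ is the weight vector at the start of iteration $t$, with $y^{(0)}=\mathbf{1}$, so that $\Phi^{(0)}=m$. I would then prove matching upper and lower bounds on $\ln \Phi^{(T)}$ after $T$ iterations and solve for the smallest $T$ that makes them consistent only when $\bar{\gamma}\le(1+\epsilon)\gamma^*$.

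For the upper bound, the multiplicative update gives
\[
\Phi^{(t+1)} \;=\; \Phi^{(t)} \;+\; \frac{\epsilon}{\omega}\sum_j y_j^{(t)} f_j(x^{(t)}).
\]
By the choice $x^{(t)}=\arg\min_{x\in P}\sum_j y_j^{(t)} f_j(x)$, we have $\sum_j y_j^{(t)} f_j(x^{(t)})\le \sum_j y_j^{(t)} f_j(x^*)\le \gamma^*\Phi^{(t)}$, where $x^*$ attains $\gamma^*$ and $f_j(x^*)\le\gamma^*$ for every $j$. Hence $\Phi^{(t+1)}\le \Phi^{(t)}(1+\epsilon\gamma^*/\omega)\le \Phi^{(t)}\exp(\epsilon\gamma^*/\omega)$, which iterates to $\ln\Phi^{(T)} \le \ln m + T\epsilon\gamma^*/\omega$.

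For the lower bound, I would track a single coordinate: for any fixed $j$,
\[
\ln y_j^{(T)} \;=\; \sum_{t=0}^{T-1} \ln\!\Bigl(1+\epsilon\,\tfrac{f_j(x^{(t)})}{\omega}\Bigr).
\]
The width definition ensures $f_j(x^{(t)})/\omega\in[0,1]$, and on this interval the concavity of $u\mapsto \ln(1+\epsilon u)$ together with its values at $u=0$ and $u=1$ yields $\ln(1+\epsilon u)\ge u\ln(1+\epsilon)$. Substituting and using $\bar x=\tfrac{1}{T}\sum_t x^{(t)}$ together with linearity of $f_j$ gives $\ln y_j^{(T)}\ge \tfrac{T\ln(1+\epsilon)}{\omega}f_j(\bar x)$. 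Maximizing over $j$ and using $\max_j y_j^{(T)}\le \Phi^{(T)}$ yields $\tfrac{T\ln(1+\epsilon)}{\omega}\bar\gamma \le \ln m + T\epsilon\gamma^*/\omega$.

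Finally I would solve this for $\bar\gamma$, obtaining $\bar\gamma\le \tfrac{\omega\ln m}{T\ln(1+\epsilon)} + \tfrac{\epsilon\gamma^*}{\ln(1+\epsilon)}$, and impose $\bar\gamma\le(1+\epsilon)\gamma^*$, which rearranges to $T\ge \tfrac{\omega\ln m}{\gamma^*\bigl((1+\epsilon)\ln(1+\epsilon)-\epsilon\bigr)}$; absorbing a $(1+\epsilon)$ slack from the $\Phi^{(t+1)}\le \Phi^{(t)}\exp(\cdot)$ relaxation matches the stated form. The only delicate step is the width-constrained concavity inequality $\ln(1+\epsilon u)\ge u\ln(1+\epsilon)$ and verifying that $f_j(x^{(t)})\in[0,\omega]$ so that it applies; once that is in hand, the remainder is algebraic manipulation, and I do not expect any genuine obstacle beyond pinning down the constants.
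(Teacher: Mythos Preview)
The paper does not prove this lemma; it is quoted verbatim as Corollary~6.3 of Young~\cite{Young1995} and used as a black box. So there is no in-paper argument to compare against, and your proposal is the standard multiplicative-weights potential analysis that Young's original proof also follows.

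Your argument is essentially correct, with one caveat you already flag. The claim that $f_j(x^{(t)})/\omega\in[0,1]$ does \emph{not} follow from the stated width definition $\omega=\max_{j,x}f_j(x)-\min_{j,x}f_j(x)$ alone; it requires $\min_{j,x}f_j(x)=0$. In the arborescence-packing instantiation used in the paper this holds (any edge outside a given arborescence $T_i$ has $f_j(e_i)=0$), so the concavity inequality $\ln(1+\epsilon u)\ge u\ln(1+\epsilon)$ applies. In Young's general setting one handles a nonzero minimum by an affine shift of the $f_j$'s before running the analysis; you should say this explicitly rather than asserting the interval bound as if it were immediate from the width definition. Also, your final remark about ``absorbing a $(1+\epsilon)$ slack'' is unnecessary: your derivation already yields $T\ge \omega\ln m / \bigl(\gamma^*((1+\epsilon)\ln(1+\epsilon)-\epsilon)\bigr)$, which is \emph{stronger} than the stated bound, so the extra $(1+\epsilon)$ in the numerator is simply slack in the cited statement, not something you need to recover.
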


We will also make use of the (exact) duality between $s$-arborescence packing and minimum $s$-cut:

\begin{lemma}[Corollary 2.1 of \cite{Gabow1995}]
The value of maximum $s$-arborescence packing is equal to the value of minimum $s$-cut.
\end{lemma}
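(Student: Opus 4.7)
The strategy follows the outline immediately below the lemma statement; the job is to fill in the numerics. First I invoke \Cref{thm:sparsification} with a small constant $\epsilon>0$ (guessing $\tilde\lambda\in[\lambda,2\lambda]$ among the $O(\log n)$ powers of $2$) to obtain a sparsifier $H$. Because $\min\{|S^*|,|\overline{S^*}|\}\le\theta\sqrt n/\log n$, property~(1) gives $\delta_H(S^*)\le(1+\epsilon)p\lambda$, while property~(2) gives $\delta_H(S)\ge(1-\epsilon)p\lambda$ for every nonempty proper $S$. Hence $S^*$ is a $\tfrac{1+\epsilon}{1-\epsilon}$-approximate mincut of $H$, and $\mathrm{mincut}(H)=\Theta(p\lambda)=O(\sqrt n)$.

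Second, I run the packing algorithm of \Cref{lem:packing} on $H$ with $P$ the convex hull of $s$-arborescences and $f_j(x)=\sum_i x_i\,\mathbf{1}[j\in A_i]/w_H(j)$. Each iteration amounts to computing a minimum-weight $s$-arborescence under edge weights $y_j/w_H(j)$, which takes $\tilde O(m)$ time by Gabow's arborescence algorithm. The width is $\omega=1/w_{\min}=O(1)$ (binomial sampling preserves integer weights $\ge 1$, and the overlaid expander carries weight $\alpha=\Theta(\log n)$), and by the arborescence--cut duality stated just above, $1/\gamma^*=\mathrm{mincut}(H)=O(\sqrt n)$. So \Cref{lem:packing} with constant $\epsilon$ terminates after $O(\sqrt n\log n)$ rounds, for a total of $\tilde O(m\sqrt n)$ time, and outputs $\bar x\in P$ with $\max_j f_j(\bar x)\le(1+\epsilon)\gamma^*$.

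Third, I interpret $\bar x$ probabilistically. The rescaling $\bar y=\bar x/((1+\epsilon)\gamma^*)$ is a feasible fractional arborescence packing whose total value $\sum_i\bar y_i$ equals $\mathrm{mincut}(H)/(1+\epsilon)$. Any $s$-arborescence must cross $\partial S^*$ at least once, while the capacity constraints give $\sum_i\bar y_i\,|A_i\cap\partial S^*|\le\delta_H(S^*)$, so sampling $A$ from the distribution $\bar x$ yields
\[
\mathds{E}[|A\cap\partial S^*|]=\frac{\sum_i\bar y_i\,|A_i\cap\partial S^*|}{\sum_i\bar y_i}\;\le\;(1+\epsilon)\cdot\frac{\delta_H(S^*)}{\mathrm{mincut}(H)}\;\le\;\frac{(1+\epsilon)^2}{1-\epsilon},
\]
which is strictly less than $2$ once $\epsilon$ is a small enough constant. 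Writing $X=|A\cap\partial S^*|\in\{1,2,\dots\}$, integrality forces $\mathds{E}[X]\ge 2-\Pr[X=1]$, and therefore $\Pr[X=1]\ge 2-\mathds{E}[X]=\Omega(1)$. Drawing $O(\log n)$ independent samples from $\bar x$ therefore contains, with high probability, an arborescence that $1$-respects $S^*$.

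The main obstacle is keeping the MWU iteration count at $\tilde O(\sqrt n)$: this requires the sparsifier to simultaneously cap $\mathrm{mincut}(H)$ at $O(\sqrt n)$ and to keep every nonzero edge weight of $H$ bounded below by a constant so that $\omega=O(1)$. Both are already guaranteed by \Cref{thm:sparsification} together with the specific expander weight $\alpha$ chosen in its proof; once those two bounds are in hand, everything else is a mechanical combination of Young's MWU analysis with the arborescence--cut duality.
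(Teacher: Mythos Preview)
Your proposal does not address the stated lemma at all. The lemma you were given is the arborescence--cut duality, quoted verbatim from Gabow~\cite{Gabow1995} as Corollary~2.1; the paper does not prove it and simply cites it as an external fact. A proof of that statement would be Edmonds' branching theorem (a polyhedral/exchange argument), and nothing in your write-up touches that.

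What you have actually written is a proof of \Cref{thm:tree-packing} --- the lemma about producing $O(\log n)$ $s$-arborescences one of which $1$-respects the mincut. Read as a proof of \emph{that} statement, your plan is correct and follows the paper's argument essentially line for line: sparsify via \Cref{thm:sparsification} so that $S^*$ is a $(1+O(\epsilon))$-approximate mincut in $H$ with $\lambda_H=O(\sqrt n)$; run Young's MWU packing (\Cref{lem:packing}) for $O(\lambda_H\log m)$ rounds, each round a min-cost arborescence computation; use the duality $\gamma^*=1/\lambda_H$ to bound the expected number of crossings of a sampled arborescence by $(1+\epsilon)^2/(1-\epsilon)<2$; and amplify by $O(\log n)$ samples. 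Your observation that $w_{\min}(H)\ge 1$ (hence $\omega\le 1$) is the one numerical detail the paper leaves implicit, and you handle it correctly. The only cosmetic difference is that the paper phrases the final step as Markov's inequality on $X-1$, whereas you write the equivalent bound $\Pr[X=1]\ge 2-\mathds{E}[X]$.

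So: as a proof of the duality lemma, this is a non-answer; as a proof of \Cref{thm:tree-packing}, it is the paper's proof.
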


\begin{proof}[Proof of \Cref{thm:tree-packing}]
First, construct $H$ according to Theorem \ref{thm:sparsification}.
By the duality above, the minimum $s$-cut on $H$ has value $\lambda_H=\frac{1}{\gamma^*}$.
Since $\min\{|S^*|,|\overline{S^*}|\}\le \theta\sqrt{n}/\log n$, we have $\lambda_H \le \delta_H(S^*)\le(1+\epsilon)p\lambda \le (1+\epsilon)\sqrt{n}$.

Run the aforementioned arborescence packing algorithm up to $O(\lambda_H\ln m)$ iterations, after which Lemma \ref{lem:packing} guarantees that $\bar{\gamma}\le(1+\epsilon)\gamma^*$.
Then $\bar{x}/\bar{\gamma}$ is a vector in $P$ with value $1/\bar{\gamma}\ge \frac1{1+\epsilon}\lambda_H$.

Consider sampling a random arborescence $A$ from the distribution specified by $\bar x/\bar\gamma$, so we choose arborescence $A_i$ with probability $\bar x_i/\bar\gamma$.
Since $\delta_H(S^*) \le (1+\epsilon)p\lambda \le (1+\epsilon)^2\lambda_H$, the expected number of edges in $A\cap\delta_H(S^*)$ is at most $\frac{(1+\epsilon)^2}{1-\epsilon}\le 1+4\epsilon$ for small enough $\epsilon>0$.
Since we always have $|A\cap\delta_H(S^*)|\ge1$, by Markov's inequality $\Pr[|A\cap\delta_H(S^*)|-1\ge 1]\le 4\epsilon \le 1/2$ for small enough $\epsilon$. 
Therefore, if we uniformly sampling $O(\log n)$ arborescences from the distribution $\bar x/\bar\gamma$, at least one of the arborescences is 1-respecting whp.

It remains to compute $x=\arg\min_{x\in P}\sum_j y_j f_j(x)$ on each iteration. Since $\sum_jy_jf_j(x)$ is linear in $x$, the  minimum must be achieved by a single arborescence.
So the task reduces to computing the minimum cost spanning $s$-arborescence, which can be done in $O(m+n\log n)$ \cite{Gabow1986}. The total time complexity, over all iterations, becomes $O((m+n\log n)\lambda_H\log n) = O((m+n\log n)\sqrt{n}\log n)$.
\end{proof}

\section{Mincut Given 1-respecting Arborescence}
\label{sec:maxflows}

We propose an algorithm (Algorithm~\ref{alg:tree}) that uses $O(\log n)$ maxflow subroutines to find the minimum $s$-cut  that $1$-respects a given $s$-arborescence.
The result is formally stated in Theorem \ref{thm:tree-alg}.

\begin{theorem} \label{thm:tree-alg}
Consider a directed graph $G = (V, E, w)$ with $n$ vertices, $m$ edges, and polynomially bounded edge weights $w_e > 0$.
Fix a global (directed) mincut $S$ of $G$.
Given an arborescence $T$ rooted at $s \in S$ with $|T \cap (S, \overline{S})| = 1$, Algorithm~\ref{alg:tree} outputs a global minimum cut of $G$ in time $O((MF(m,n) + m)\cdot \log n)$.
\end{theorem}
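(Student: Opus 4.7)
Proof proposal. My plan is to design a recursive algorithm over the arborescence $T$ using a centroid decomposition, making a constant number of maxflow subroutine calls at each recursion level, for $O(\log n)$ maxflow calls overall. The output is the minimum $s$-cut of $G$ that 1-respects $T$, which by hypothesis equals the global mincut.

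The structural fact grounding the algorithm is: if an $s$-cut $(A,\overline A)$ 1-respects $T$ via the tree edge $(u,v)$, then $\overline A\subseteq\mathrm{subtree}_T(v)$. Indeed, for any $w\in\overline A$, the path in $T$ from $s\in A$ to $w$ must leave $A$ for the first time via the unique $A\to\overline A$ tree edge, which must be $(u,v)$; so $v$ lies on this path and $w$ is a descendant of $v$. Hence, for each tree edge $e=(u,v)$, there is an associated quantity $\mathrm{mc}(e):=\min\{\delta(A):s\in A,\,v\in\overline A,\,V\setminus\mathrm{subtree}_T(v)\subseteq A\}$, computable by one maxflow in $G$ after contracting $V\setminus\mathrm{subtree}_T(v)$ into the source; and the global mincut equals $\min_{e\in T}\mathrm{mc}(e)$. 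A naive algorithm performs $n-1$ such maxflows.

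To reduce to $O(\log n)$ maxflows, the recursion maintains an active subtree $T'\subseteq T$ and a contracted graph $G'$ in which vertices outside $T'$ have been collapsed into supernodes, each already fixed to the source or sink side. At each recursion step, pick an edge-centroid $e_c=(u_c,v_c)$ of $T'$, splitting $V(T')$ into $T_1:=\mathrm{subtree}_{T'}(v_c)$ and $T_2:=V(T')\setminus T_1$ with $\max(|T_1|,|T_2|)\le\tfrac23|V(T')|$. One maxflow in $G'$ computes $\mathrm{mc}(e_c)$. We then recurse into two smaller instances: (i) on $T_1$, with $V(G')\setminus T_1$ merged into a source-side supernode, handling 1-respecting cuts whose cut edge lies in $T_1$; and (ii) on $T_2$, with $T_1$ appropriately collapsed, handling 1-respecting cuts whose cut edge lies in $T_2$. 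Since the active subtree halves each level, the recursion has depth $O(\log n)$ and uses $O(1)$ maxflow calls per level, giving $O((MF(m,n)+m)\log n)$ time overall.

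The main obstacle is instance (ii) when the 1-respecting edge $(u,v)\in T_2$ has $v$ an ancestor of $v_c$ in $T$: then $\mathrm{subtree}_T(v)\supseteq T_1$, so the sink side $\overline A$ may include an upward-closed fragment of $T_1$, and naively contracting $T_1$ into one supernode can distort cut weights. My plan is to branch on the side of $v_c$ using one additional maxflow: if $v_c\in A$, then the 1-respecting property applied to every tree edge strictly inside $T_1$ propagates downward and forces $T_1\subseteq A$, so $T_1$ safely contracts to the source; if $v_c\in\overline A$, then the upward-closed fragment of $T_1$ on the sink side must still be located, which is handled by reusing the recursion on $T_1$ from instance (i) via a potential argument that charges this extra work to the centroid levels without inflating the $O(\log n)$ maxflow budget. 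The final output is the minimum $\mathrm{mc}(\cdot)$ value computed across all recursion paths, which equals the global mincut by the structural fact.
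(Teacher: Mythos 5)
Your high-level plan (centroid decomposition of $T$, contracting the complement of the active subtree into the source, and reducing everything to a logarithmic number of maxflow calls) is the same as the paper's, and your structural fact --- that a cut $1$-respecting $T$ via tree edge $(u,v)$ has its sink side contained in $\mathrm{subtree}_T(v)$, indeed forming a connected subtree of $T$ --- is correct and matches the paper's Lemma~\ref{lem:tree-barS-path}. But there is a genuine gap in the accounting. Your recursion handles one tree edge $e_c$ per recursion \emph{node} and then branches into two subproblems, so the recursion tree is a binary tree of depth $O(\log n)$ with one node per tree edge, i.e.\ $\Theta(n)$ nodes in total, each performing its own maxflow. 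The claim ``$O(1)$ maxflow calls per level'' is therefore false as stated: level $j$ of the recursion has up to $2^j$ active subproblems, and summing over levels gives $\Theta(n)$ maxflow calls, not $O(\log n)$. You cannot rescue this by noting that the contracted graphs are small, because the separate calls cannot simply be added up against $MF(m,n)$ (which has, e.g., an additive $n^{3/2}$ term), and each contracted instance can still contain $\Theta(m)$ edges incident to the supernodes.

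The missing idea is the paper's batching mechanism. The paper uses a \emph{vertex}-centroid decomposition organized into layers $P_1,\dots,P_\ell$, and for each layer $i$ builds a \emph{single} auxiliary graph $G_i$: for every subtree $U\in P_i$ it keeps the edges inside $U$, reroutes every edge entering $U$ from outside so that it leaves $s$ (equivalently, contracts $\overline U$ into $s$ separately for each $U$), and adds an infinite-capacity edge from the centroid of $U$ to a common super-sink $t_i$. Because distinct subtrees of the same layer end up with no edges between them in $G_i$, one maxflow from $s$ to $t_i$ simultaneously computes, for every $U\in P_i$ with centroid $u$, the minimum cut separating $\overline U$ from $u$ (Lemma~\ref{lem:tree-fi}); correctness then follows from the fact that at the first layer whose centroid set meets $\overline S$, the entire set $\overline S$ lies in a single subtree whose centroid is in $\overline S$ (Lemma~\ref{lem:tree-barS-inone}). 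This is what brings the count down to one maxflow per layer, i.e.\ $O(\log n)$ in total. Separately, your treatment of case (ii) when $v_c\in\overline A$ --- ``reusing the recursion on $T_1$ \dots via a potential argument that charges this extra work to the centroid levels'' --- is an assertion, not an argument: the optimal cut there couples the choice of cut edge in $T_2$ with the choice of the connected fragment of $T_1$ on the sink side, and you have not shown this joint optimization decomposes into your two recursive calls. The paper sidesteps this entirely by parameterizing candidates by centroid vertices rather than by tree edges.
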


We first give some intuition for Algorithm~\ref{alg:tree}.
Because $s \in S$, if we could find a vertex $t \in \overline{S}$, then computing the $s$-$t$ mincut using one maxflow call would yield a global mincut of $G$.  However, we cannot afford to run one maxflow between $s$ and every other vertex in $G$.
Instead, we carefully partition the vertices into $\ell = O(\log n)$ sets $(C_i)_{i=1}^{\ell}$.
We show that for each $C_i$, we can modify the graph appropriately so that it allows us to (roughly speaking) compute the maximum flow between $s$ and every vertex $c \in C_i$ using one maxflow call.

More specifically, Algorithm~\ref{alg:tree} has two stages. In the first stage, we compute a \emph{centroid decomposition} of $T$. Recall that a centroid of $T$ is a vertex whose removal disconnects $T$ into subtrees with at most $n/2$ vertices. This process is done recursively, starting with the root $s$ of $T$. We let $P_1$ denote the subtrees resulting from the removal of $s$ from $T$. In each subsequent step $i$, we compute the set $C_i$ of the centroids of the subtrees in $P_i$. We then remove the centroids and add the resulting subtrees to $P_{i+1}$. This process continues until no vertices remain.

In the second stage, for each layer $i$, we construct a directed graph $G_i$ and perform one maxflow computation on $G_i$. The maxflow computation on $G_i$ would yield candidate cuts for every vertex in $C_i$, and after computing the appropriate maximum flow across every layer, we output the minimum candidate cut as the minimum cut of $G$. The details are presented in Algorithm~\ref{alg:tree}.

\begin{algorithm}[!ht]
\caption{Finding the global minimum directed cut.}
\label{alg:tree}
\SetKwInOut{Input}{Input}
\SetKwInOut{Output}{Output}
\setcounter{AlgoLine}{0}
\Input{An arborescence $T$ rooted at $s\in S$ such that $S$ 1-respects $T$.}
\smallskip
// Stage I: Build centroid decomposition. \\
\smallskip
Let $C_0 = \{s\}$, $P_1 =$ the set of subtrees obtained by removing $s$ from $T$, and $i = 1$. \\
\While{$P_i\ne\varnothing$}{
Initialize $C_i$ (the centroids of $P_i$) and $P_{i+1}$ as empty sets. \\
\For{each subtree $U\in P_i$}{
Compute the centroid $u$ of $U$ and add it to $C_i$. \\
    Add all subtrees generated by removing $u$ from $U$ to $P_{i+1}$.
    }
Set $\ell = i$ and iterate $i = i+1$.
}
\smallskip
// Stage II: Calculate integrated maximum flow for each layer. \\
\smallskip
\For{$i = 1$ {\bf to} $\ell$}{
Construct a digraph $G_i = (V \cup \{t_i\}, E_1 \cup E_2 \cup E_3)$ as follows (see \Cref{fig:my_label}): \label{alg:step11}\\
    \quad 1) Add edges $E_1 = E \cap \cup_{U \in P_i} (U\times U)$ with capacity equal to their original weight.\label{alg:step12}\\
    \quad 2) Add edges $E_2 = \{(s,v):(u,v)\in E\setminus E_1\}$ with capacity of $(s,v)$ equal to the original weight of $(u,v)$. \\ %
    \quad 3) Add edges $E_3 = \{(u,t_i):u\in C_i\}$ with infinite capacity.\label{alg:step14}\\
Compute the maximum $s$-$t_i$ flow $f^*_i$ in $G_i$. \label{alg:step15} \\
For each component $U\in P_i$ with centroid $u$, the value of $f^*_i$ on edge $(u,t_i)$ is a candidate cut value, and the nodes in $U$ that can reach $u$ in the residue graph is a candidate for $\overline{S}$.
}
Return the smallest candidate cut as minimum $s$-cut and the corresponding $(S, \overline{S})$.
\end{algorithm}

\begin{figure}
    \centering
    \includegraphics[width=\linewidth]{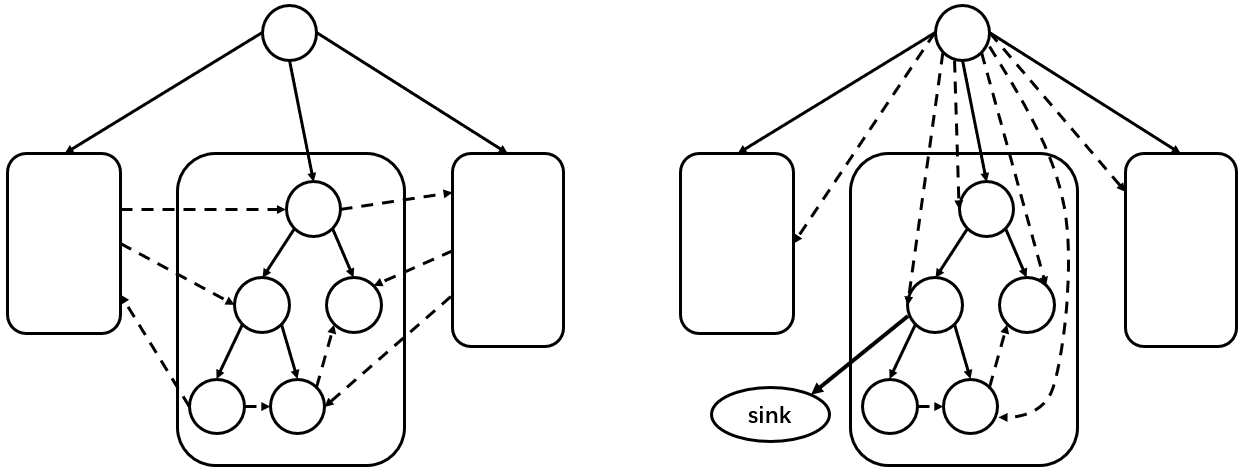}
    \caption{
    Construction of auxiliary graph $G_i$ in Algorithm \ref{alg:tree}.
    Solid lines represent the arborescence $T$.
    Dashed lines are other edges in the graph.
    Rectangles are sets formed by the first level of centroid decomposition.
    Left: The original graph.
    Right: The part of $G_1$ solving the case that the mincut separates root and the centroid of the middle subtree.%
    }
    \label{fig:my_label}
\end{figure}

We first state two technical lemmas that we will use to prove Theorem~\ref{thm:tree-alg}.

\begin{lemma} \label{lem:tree-barS-inone}
Recall that $P_i$ is the set of subtrees in layer $i$ and $C_i$ contains the centroid of each subtree in $P_i$.
If $C_{j} \subseteq S$ for every $0 \le j < i$, then $\overline{S}$ is contained in exactly one subtree in $P_i$, and consequently, at most one vertex $u \in C_i$ can be in $\overline{S}$.
\end{lemma}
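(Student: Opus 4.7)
The key structural observation to exploit is that because the mincut $S$ $1$-respects the arborescence $T$, the set $\overline{S}$ is a connected subtree of $T$: the unique crossing edge $(u,v)$ of $T \cap \partial S$ has $u \in S$ and $v \in \overline{S}$, and $\overline{S}$ is exactly the set of descendants of $v$ in the arborescence (otherwise some tree edge inside $\overline{S}$'s descendants would cross the cut, contradicting $1$-respecting). With this in hand, I would prove the lemma by induction on $i$.

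Base case $i=1$: by assumption $C_0 = \{s\} \subseteq S$, so $s \notin \overline{S}$. Removing $s$ from $T$ yields the subtrees of $P_1$. Since $\overline{S}$ is connected in $T$ and avoids $s$, it lies inside a single connected component of $T \setminus \{s\}$, i.e.\ inside exactly one subtree of $P_1$. It lies in \emph{exactly} one because $\overline{S}$ is nonempty (the cut is nontrivial) and the subtrees of $P_1$ are vertex-disjoint.

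Inductive step: assume $\overline{S}$ is contained in a unique subtree $U^{*} \in P_{j}$ for some $j < i$, and let $c \in C_{j}$ be the centroid of $U^{*}$. By hypothesis $C_{j} \subseteq S$, so $c \notin \overline{S}$. Every $T$-edge between vertices of $\overline{S}$ lies in $U^{*}$ (since $\overline{S} \subseteq U^{*}$), so $\overline{S}$ is also connected as a subgraph of $U^{*}$. Removing the vertex $c \notin \overline{S}$ from $U^{*}$ cannot disconnect $\overline{S}$, hence $\overline{S}$ lies in a single connected component of $U^{*} \setminus \{c\}$, which is one of the subtrees newly added to $P_{j+1}$. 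The uniqueness again follows from $\overline{S} \neq \emptyset$ and the disjointness of the components. Iterating up to level $i$ gives the first claim.

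The second claim is then immediate: $C_{i}$ contains exactly one centroid per subtree of $P_{i}$, and since all of $\overline{S}$ is confined to a single such subtree, at most one element of $C_{i}$ (namely the centroid of that particular subtree, and only if it happens to lie in $\overline{S}$) can belong to $\overline{S}$. I do not anticipate a significant obstacle here; the only subtle point is making sure the connectivity of $\overline{S}$ is preserved through the recursive centroid removals, which follows cleanly from the fact that every removed centroid is assumed to lie in $S$.
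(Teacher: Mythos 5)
Your overall strategy is the same as the paper's: establish that $\overline{S}$ is connected in $T$ (the paper's Lemma~\ref{lem:tree-barS-path}), then observe that the vertices removed before layer $i$ are exactly $\bigcup_{0\le j<i}C_j\subseteq S$, so their removal cannot split $\overline{S}$ across two subtrees. Whether one phrases this as an induction on layers (you) or as a contradiction at the first moment two $\overline{S}$-vertices are separated (the paper) is cosmetic.

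However, your justification of the connectivity of $\overline{S}$ contains a genuine error. You claim that $\overline{S}$ is \emph{exactly} the set of descendants of $v$, where $(u,v)$ is the unique tree edge in $\partial S$, arguing that otherwise some tree edge "inside $\overline{S}$'s descendants would cross the cut." This tacitly treats the cut as undirected. Here $\partial S$ contains only edges directed from $S$ to $\overline{S}$; a tree edge from $\overline{S}$ back into $S$ is \emph{not} a cut edge and is not forbidden by the $1$-respecting hypothesis. For example, with $T = s\to a\to b$ and $S=\{s,b\}$, the only tree edge in $\partial S$ is $(s,a)$, yet $\overline{S}=\{a\}$ is a proper subset of the descendants of $a$. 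So the "descendant set" characterization is false in general. The weaker statement you actually use --- that $\overline{S}$ is connected in $T$ --- is still true, but it requires the argument the paper gives: for $x,y\in\overline{S}$ their lowest common ancestor $z$ must lie in $\overline{S}$ (else the two edge-disjoint downward paths from $z$ would each contribute an $S\to\overline{S}$ tree edge), and then the $s$-to-$z$ path already spends the single allowed crossing, so the paths from $z$ down to $x$ and to $y$ can never re-enter $\overline{S}$ from $S$ and must stay in $\overline{S}$ throughout. With that repair, the rest of your induction goes through.
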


\begin{lemma} \label{lem:tree-fi}
Let $G_i$ be the graph constructed in Step~\ref{alg:step11} of Algorithm~\ref{alg:tree}.
Let $f^*_i$ be a maximum $s$-$t_i$ flow on $G_i$ as in Step~\ref{alg:step15}.
For any $U \in P_i$ with centroid $u$, the amount of flow $f^*_i$ puts on edge $(u, t_i)$ is equal to the value of the minimum cut between $\overline{U}$ and $u$.
\end{lemma}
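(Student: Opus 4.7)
My plan is to reduce the maxflow computation on $G_i$ to independent subproblems, one per centroid in $C_i$, and then identify each such subproblem with a minimum $\overline{U}$-to-$u$ cut in $G$.

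First I would show that in $G_i$, every $s$-$u$ path is entirely contained in $\{s\}\cup U$. The key facts are: (i) $s\in C_0$ is not in any subtree of $P_i$, so its only outgoing edges in $G_i$ lie in $E_2$; (ii) from any vertex $v\notin U$, the $E_1$-edges stay within $v$'s own subtree (or do not exist at all if $v$ is not in any $P_i$-subtree), no $E_2$-edge originates at $v$, and $t_i$ is a sink. Hence $u\in U$ is unreachable from any $v\notin U$, forcing every $s$-$u$ path to stay inside $\{s\}\cup U$.

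Using this containment I would argue that the maximum flow on $G_i$ decomposes across centroids. For distinct $u,u'\in C_i$ with subtrees $U\ne U'$, the relevant vertex sets $\{s\}\cup U$ and $\{s\}\cup U'$ share only $s$, the $E_1$-edges are disjoint, and the $E_2$-edges into $U$ and into $U'$ target disjoint vertices, so flow destined for $u$ and flow destined for $u'$ occupy disjoint capacity. Let $F_u$ denote the max $s$-to-$u$ flow in the induced subgraph on $\{s\}\cup U$ using $E_1\cap(U\times U)$ together with the $E_2$-edges $(s,v)$ for $v\in U$. Since $(u,t_i)$ has infinite capacity, the total max $s$-$t_i$ flow in $G_i$ is $\sum_{u\in C_i}F_u$, and because the per-centroid bound $f_u\le F_u$ holds for every maximum flow, equality in the sum forces the flow on $(u,t_i)$ to be exactly $F_u$ in any maxflow.

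By maxflow-mincut duality, $F_u$ equals the minimum $s$-$u$ cut in the induced subgraph, so I would finish by exhibiting a value-preserving bijection between such cuts and $\overline{U}$-to-$u$ cuts of $G$. Given a subgraph cut with sink side $Y\subseteq U$ and $u\in Y$, map it to $(V\setminus Y,Y)$ in $G$; since $Y\subseteq U$ we have $\overline{U}\subseteq V\setminus Y$, so this is a valid $\overline{U}$-to-$u$ cut. The subgraph cut value is the sum of an $E_1$ term $\sum_{a\in U\setminus Y,\,b\in Y,\,(a,b)\in E} w(a,b)$ and an $E_2$ term $\sum_{v\in Y}\sum_{u'\notin U,\,(u',v)\in E} w(u',v)$, which together equal $\delta_G(V\setminus Y)$. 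The reverse direction is analogous, so the minima agree. The main subtlety will be verifying the $E_2$ capacity accounting: for $v\in U$, an edge $(u',v)\in E$ lies in $E_1$ iff $u'\in U$ as well, so $E\setminus E_1$ restricted to targets $v\in U$ is exactly the set of $G$-edges entering $U$ from $\overline{U}$; this ensures that the $E_2$ piece of the cut matches the $\overline{U}$-to-$Y$ contribution in $G$ on the nose. Once this bookkeeping is in place, the three steps combine to give the claimed equality.
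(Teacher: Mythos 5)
Your proposal is correct and follows essentially the same approach as the paper's proof: decompose the $s$-$t_i$ maxflow in $G_i$ into independent per-subtree flows, and identify each per-subtree problem with the $\overline{U}$-to-$u$ mincut in $G$ via the contraction of $\overline{U}$ into $s$. Your write-up is in fact more careful than the paper's (which asserts the independence and the contraction equivalence without the reachability argument, the $\sum_u F_u$ summation step, or the explicit value-preserving cut correspondence), but the underlying argument is identical.
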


We defer the proofs of Lemmas~\ref{lem:tree-barS-inone}~and~\ref{lem:tree-fi}, and first use them to prove Theorem~\ref{thm:tree-alg}.

\begin{proof}[Proof of Theorem~\ref{thm:tree-alg}]
We first prove the correctness of Algorithm~\ref{alg:tree}.

Because $C_0 = \{s\}$ and $s \in S$, and the $C_i$'s form a disjoint partition of $V$, there must be a layer $i$ such that for the first time, we have a centroid $u \in C_i$ that belongs to $\overline{S}$.
By Lemma~\ref{lem:tree-barS-inone}, we know that $\overline{S}$ must be contained in exactly one subtree $U \in P_i$, and hence $u$ must be the centroid of $U$.
In summary, we have $u \in \overline{S}$ and $\overline{S} \subseteq U$.

Consider the graph $G_i$ constructed for layer $i$.
By Lemma~\ref{lem:tree-fi}, based on the flow $f^*_i$ puts on the edge $(u, t_i)$, we can recover the value of the minimum (directed) cut between $\overline{U}$ and $u$.
Because $\overline{S} \subseteq U$ (or equivalently $\overline{U} \subseteq S$) and $u \in \overline{S}$, the cut $(S, \overline{S})$ is one possible cut that separates $\overline{U}$ and $u$. Therefore, the flow that $f^*_i$ puts on the edge $(u, t_i)$ is equal to the global mincut value in $G$.

In addition, the candidate cut value for any other centroid $u'$ of a subtree $U' \in P_i$ must be at least the mincut value between $s$ and $u'$.
This is because the additional restriction that the cut has to separate $\overline{U'}$ from $u'$ can only make the mincut value larger, and the value of this cut in $G_i$ is equal to the value of the same cut in $G$.
Therefore, the minimum candidate cut value in all $\ell$ layers must be equal to the global mincut value of $G$.

Now we analyze the running time of Algorithm~\ref{alg:tree}. We can find the centroid of an $n$-node tree in time $O(n)$ (see e.g., \cite{Megiddo1981}).
The total number of layers $\ell = O(\log n)$ because removing the centroids reduces the size of the subtrees by at least a factor of $2$. Thus, the running time of Stage I of Algorithm~\ref{alg:tree} is $O(n \log n)$. In Stage II, we can construct each $G_i$ in $O(m)$ time and every $G_i$ has $O(m)$ edges.
Since there are $O(\log n)$ layers and the maximum flow computations take a total of $O(MF(m,n)\cdot \log n)$ time,
the overall runtime is $O(n \log n + (MF(m, n) + m) \log n) = O((MF(m, n) + m) \log n)$.
\end{proof}

Before proving Lemmas~\ref{lem:tree-barS-inone}~and~\ref{lem:tree-fi} we first prove the following lemma.

\begin{lemma} \label{lem:tree-barS-path}
If $x$ and $y$ are vertices in $\overline{S}$, then every vertex on the (undirected) path from $x$ to $y$ in the arborescence $T$ also belongs to $\overline{S}$.
\end{lemma}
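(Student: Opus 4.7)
The plan is to exploit the hypothesis that $T$ is $1$-respected by $(S,\overline{S})$ in a strong structural form: I aim to show that the vertices of $\overline{S}$ form a connected subtree of $T$ hanging off a single ``gateway'' vertex $v^*$. Once this structural statement is in hand, the lemma is immediate, because any undirected $T$-path between two vertices of a subtree stays in that subtree.

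Concretely, I will first identify the unique edge $(u^*, v^*) \in T$ with $u^* \in S$ and $v^* \in \overline{S}$; existence and uniqueness follow from $1$-respecting together with $s \in S$ and $\overline{S} \ne \varnothing$. Next, for any $w \in \overline{S}$ I consider the unique directed $T$-path $\pi_w$ from $s$ to $w$. Since $\pi_w$ starts in $S$ and ends in $\overline{S}$, it must contain some $S$-to-$\overline{S}$ edge, which by uniqueness is $(u^*, v^*)$; hence $w$ is a $T$-descendant of $v^*$. Moreover, if any vertex strictly between $v^*$ and $w$ on $\pi_w$ lay in $S$, then $\pi_w$ would have to leave $\overline{S}$ and then re-enter $\overline{S}$ along a \emph{second} $S$-to-$\overline{S}$ edge of $T$, contradicting $1$-respecting. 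Therefore the entire directed $T$-path from $v^*$ to $w$ lies in $\overline{S}$.

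Finally, given $x, y \in \overline{S}$, both are $T$-descendants of $v^*$, so their least common $T$-ancestor $z$ is also a descendant of $v^*$ (possibly $z = v^*$), and in particular $z$ lies on the directed $T$-path from $v^*$ to $x$, which has just been shown to lie entirely in $\overline{S}$. The undirected $x$-to-$y$ path in $T$ decomposes as the directed $z$-to-$x$ sub-path followed by the directed $z$-to-$y$ sub-path traversed in reverse; each of these is a suffix of the corresponding $v^*$-to-$x$ or $v^*$-to-$y$ directed path and hence lies in $\overline{S}$. The main obstacle, such as it is, is purely bookkeeping: one must carefully translate between directed descendancy in $T$ and undirected $T$-paths and resist the temptation to argue by symmetry about entering and leaving $\overline{S}$, since only one of the two directions $S \to \overline{S}$ is constrained by the $1$-respecting hypothesis.
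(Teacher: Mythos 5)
Your proof is correct and rests on the same mechanism as the paper's: counting $S$-to-$\overline{S}$ transitions along directed root-to-vertex paths in $T$ against the single available cut edge, and then reducing the undirected $x$--$y$ path to the two directed paths from the lowest common ancestor. The paper merely reorders the steps (it shows the LCA lies in $\overline{S}$ directly, rather than first isolating the unique cut edge $(u^*,v^*)$ and the subtree below $v^*$), so the two arguments are essentially the same.
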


\begin{proof}
Consider the lowest common ancestor $z$ of $x$ and $y$.
Because there is a directed path from $z$ to $x$ and a directed path from $z$ to $y$, we must have $z \in \overline{S}$.
Otherwise, there are at least two edges in $T$ that go from $S$ to $\overline{S}$.

Because $s \in S$ and $z \in \overline{S}$, there is already an edge in $T$ (on the path from $s$ to $z$) that goes from $S$ to $\overline{S}$.
Consequently, all other edges in $T$ cannot go from $S$ to $\overline{S}$, which means the entire path from $z$ to $x$ (and similarly $z$ to $y$) must be in $\overline{S}$.
\end{proof}

Recall that Lemma~\ref{lem:tree-barS-inone} states that if all the centroids in previous layers are in $S$, then $\overline{S}$ is contained in exactly one subtree $U$ in the current layer $i$.

\begin{proof}[Proof of Lemma~\ref{lem:tree-barS-inone}]
For contradiction, suppose that there exist distinct subtrees $U_1$ and $U_2$ in $P_i$ and vertices $x,y \in \overline{S}$ such that $x \in U_1$ and $y \in U_2$.

By Lemma~\ref{lem:tree-barS-path}, any vertex on the (undirected) path from $x$ to $y$ also belongs to $\overline{S}$.
Consider the first time that $x$ and $y$ are separated into different subtrees.
This must have happened because some vertex on the path from $x$ to $y$ is removed. However, the set of vertices removed at this point of the algorithm is precisely $\bigcup_{0 \le j < i} C_j$, but our hypothesis assumes that none of them are in $\overline{S}$.
This leads to a contradiction and therefore $\overline{S}$ is contained in exactly one subtree of $P_i$.

It follows immediately that at most one centroid $u \in C_i$ can be in $\overline{S}$.
\end{proof}

Next we prove Lemma~\ref{lem:tree-fi}, which states that the maximum flow between $s$ and $t_i$ in the modified graph $G_i$ allows one to simultaneously compute a candidate mincut value for each vertex $u \in C_i$.

\begin{proof}[Proof of Lemma~\ref{lem:tree-fi}]
First observe that the maxflow computation from $s$ to $t_i$ in $G_i$ can be viewed as multiple independent maxflow computations.
The reason is that, for any two subtrees $U_1, U_2 \in P_i$, there are only edges that go from $s$ into $U_1$ and from $U_1$ to $t_i$ in $G_i$ (similarly for $U_2$), but there are no edges that go between $U_1$ and $U_2$.

The above observation allows us to focus on one subtree $U \in P_i$.
Consider the procedure that we produce $G_i$ from $G$ in Steps~\ref{alg:step12}~to~\ref{alg:step14} of Algorithm~\ref{alg:tree}.
The edges with both ends in $U$ are intact (the edge set $E_1$).
If we contract all vertices out of $U$ into $s$, then all edges that enter $U$ would start from $s$, which is precisely the effect of removing cross-subtree edges and adding the edges in $E_2$.
One final infinity-capacity edge $(u, t_i) \in E_3$ connects the centroid of $U$ to the super sink $t_i$.

Therefore, the maximum $s$-$t_i$ flow $f^*_i$ computes the maximum flow between $\overline{U}$ and $u \in U$ simultaneously for all $U \in P_i$, whose value is reflected on the edge $(u, t_i)$.
It follows from the maxflow mincut theorem that the flow on edge $(u, t_i)$ is equal to the mincut value between $\overline{U}$ and $u$ in $G$ (i.e., the minimum value $w(A, \overline{A})$ among all $A \subset V$ with $\overline{U} \subseteq A$ and $u \in \overline{A}$).
\end{proof}

\bibliographystyle{alpha}
\bibliography{refs}

\newcommand{\etalchar}[1]{$^{#1}$}
\begin{thebibliography}{vdBLL{\etalchar{+}}21}

\bibitem[CGL{\etalchar{+}}19]{Chuzhoy2019}
Julia Chuzhoy, Yu~Gao, Jason Li, Danupon Nanongkai, Richard Peng, and
  Thatchaphol Saranurak.
\newblock A deterministic algorithm for balanced cut with applications to
  dynamic connectivity, flows, and beyond.
\newblock {\em CoRR}, abs/1910.08025, 2019.

\bibitem[CQ21]{ChekuriQ21}
Chandra Chekuri and Kent Quanrud.
\newblock Faster algorithms for rooted connectivity in directed graphs, 2021.

\bibitem[Edm73]{Edmonds73}
Jack Edmonds.
\newblock Edge-disjoint branchings.
\newblock {\em Combinatorial algorithms}, 1973.

\bibitem[ET75]{EvenT75}
Shimon Even and Robert~Endre Tarjan.
\newblock Network flow and testing graph connectivity.
\newblock {\em {SIAM} J. Comput.}, 4(4):507--518, 1975.

\bibitem[Gab95]{Gabow1995}
H.N. Gabow.
\newblock A matroid approach to finding edge connectivity and packing
  arborescences.
\newblock {\em Journal of Computer and System Sciences}, 50(2):259--273, 1995.

\bibitem[GGST86]{Gabow1986}
Harold Gabow, Zvi Galil, Thomas Spencer, and Robert Tarjan.
\newblock Efficient algorithms for finding minimum spanning tree in undirected
  and directed graphs.
\newblock {\em Combinatorica}, 6:109--122, 06 1986.

\bibitem[GLP21]{GaoLP21}
Yu~Gao, Yang~P. Liu, and Richard Peng.
\newblock Fully dynamic electrical flows: Sparse maxflow faster than
  goldberg-rao.
\newblock {\em CoRR}, abs/2101.07233, 2021.

\bibitem[GR98]{GoldbergR98}
Andrew~V. Goldberg and Satish Rao.
\newblock Beyond the flow decomposition barrier.
\newblock {\em J. {ACM}}, 45(5):783--797, 1998.

\bibitem[GT88]{GoldbergT88}
Andrew~V. Goldberg and Robert~Endre Tarjan.
\newblock A new approach to the maximum-flow problem.
\newblock {\em J. {ACM}}, 35(4):921--940, 1988.

\bibitem[HO94]{HaoO92}
Jianxiu Hao and James~B. Orlin.
\newblock A faster algorithm for finding the minimum cut in a directed graph.
\newblock {\em J. Algorithms}, 17(3):424--446, 1994.

\bibitem[Kar00]{Karger00}
David~R Karger.
\newblock Minimum cuts in near-linear time.
\newblock {\em Journal of the ACM (JACM)}, 47(1):46--76, 2000.

\bibitem[Li21]{Li21}
Jason Li.
\newblock Deterministic mincut in almost-linear time.
\newblock {\em STOC}, 2021.

\bibitem[MS89]{MansourS89}
Yishay Mansour and Baruch Schieber.
\newblock Finding the edge connectivity of directed graphs.
\newblock {\em Journal of Algorithms}, 10(1):76--85, 1989.

\bibitem[MTZC81]{Megiddo1981}
N.~Megiddo, Arie Tamir, Eitan Zemel, and Ramaswamy Chandrasekaran.
\newblock An $o(n\log^2 n)$ algorithm for the k th longest path in a tree with
  applications to location problems.
\newblock {\em SIAM Journal on Computing}, 10, 05 1981.

\bibitem[Qua21]{Quanrud21}
Kent Quanrud.
\newblock Fast approximations for rooted connectivity in weighted directed
  graphs, 2021.

\bibitem[Sch79]{Schnorr79}
Claus-Peter Schnorr.
\newblock Bottlenecks and edge connectivity in unsymmetrical networks.
\newblock {\em SIAM Journal on Computing}, 8(2):265--274, 1979.

\bibitem[vdBLL{\etalchar{+}}21]{BrandLLSSSW21}
Jan van~den Brand, Yin~Tat Lee, Yang~P. Liu, Thatchaphol Saranurak, Aaron
  Sidford, Zhao Song, and Di~Wang.
\newblock Minimum cost flows, mdps, and $\ell_1$-regression in nearly linear
  time for dense instances.
\newblock {\em CoRR}, abs/2101.05719, 2021.

\bibitem[You95]{Young1995}
Neal~E. Young.
\newblock Randomized rounding without solving the linear program.
\newblock In {\em Proceedings of the Sixth Annual ACM-SIAM Symposium on
  Discrete Algorithms}, SODA '95, page 170–178, USA, 1995.

\end{thebibliography}

\end{document}